\newtheorem{proposition}{Proposition}
\newtheorem{theorem}{Theorem}
\theoremstyle{definition}
\newtheorem{definition}{Definition}
\newtheorem{remark}{Remark}
\newtheorem{example}{Example}
\newtheorem{problem}{Problem}
\newcommand{\R}{\mathbb{R}}
\newcommand{\N}{\mathbb{N}}
\newcommand{\A}{\mathcal{A}}
\title{\LARGE \bf Lyapunov Differential Equation Hierarchy and Polynomial Lyapunov Functions for Switched Linear Systems}
\author{Matthew Abate, Corbin Klett, Samuel Coogan, and Eric Feron
\thanks{This material is based upon work supported by the United States Government under Air Force Office of Scientific Research grant number FA9550-19-1-0015.  Any opinions, findings and conclusions or recommendations expressed in this material are those of the author(s) and do not necessarily reflect the views of the Government.}
\thanks{M. Abate is with the School of Mechanical Engineering and the School of Electrical and Computer Engineering, Georgia Institute of Technology, Atlanta, 30332, USA: {\tt\small Matt.Abate@GaTech.edu}.}
\thanks{C. Klett and E. Feron are with the School of Aerospace Engineering, Georgia Institute of Technology, Atlanta, 30332, USA: 
{\tt\small Corbin@GaTech.edu} and {\tt\small Feron@GaTech.edu}.}
\thanks{S. Coogan is with the School of Electrical and Computer Engineering and the School of Civil and Environmental Engineering, Georgia Institute of Technology, Atlanta, 30332, USA: {\tt\small Sam.Coogan@GaTech.edu}.}
}
\begin{document}

\maketitle
\thispagestyle{empty}
\pagestyle{empty}

\begin{abstract}
This work studies the problem of searching for homogeneous polynomial Lyapunov functions for stable switched linear systems.
Specifically, we show an equivalence between polynomial Lyapunov functions for systems of this class and quadratic Lyapunov functions for a related hierarchy of Lyapunov differential equations. 
This creates an intuitive procedure for checking the stability properties of switched linear systems, and a computationally competitive algorithm is presented for generating high-order homogeneous polynomial Lyapunov functions in this manner. Additionally, we provide a comparison between polynomial Lyapunov functions generated with our proposed approach and polynomial Lyapunov functions generated with a more traditional sum-of-squares based approach.
\end{abstract}

\section{Introduction}
Switched dynamical system models appear throughout the field of control theory, and the structure of such models has been widely explored and exploited in order to the analyze stability and performance of real-world systems \cite{liberzon2003switching, feron1996quadratic}. In turn, such results have inspired the use of switched systems as a modeling tool for many challenging analysis problems. For example, hybrid dynamical systems can be represented as switched systems, as can some stochastic systems \cite{4806347, Brockett1993}. Certain nonlinearities such as saturation and mechanical backlash can be modeled using switched linear systems \cite{liberzon2003switching, yak1, BEFB:94, Barmish1985}, as can random noise \cite{yoon2019}. Additionally, switched linear systems can be used as an over-approximating abstraction for more general nonlinearities \cite{yak1, yak2} and, for this reason, switched linear system models appear widely in robustness analysis literature \cite{BEFB:94, yak3}. 
Further, the consistent use of switched system models in safety-critical applications has facilitated the need for computationally efficient analysis tools.

Stability-type proofs for switched dynamical systems often require the construction of polynomial Lyapunov functions.
Such proofs guarantee system stability by associating a global energy field with the system state space and then showing that energy is decreasing for all initial conditions and all switched modes.  The simplest class of polynomial Lyapunov function is the class of \textit{quadratic} Lyapunov functions and, as such, the search for quadratic Lyapunov functions has computational advantages in comparison to other methods of stability analysis. Numerous works, including \cite{Barmish1985, rantzer1997} explore the guarantees attainable when solely searching for quadratic Lyapunov functions, however, recent progress in sum-of-squares based techniques have shown that higher-order polynomial Lyapunov functions can be calculated as well with more accurate stability guarantees \cite{parrilo2000, parrilo2003semidefinite}.  In general, sum-of-squares based techniques require little machinery to implement; these methods cast the search for a polynomial Lyapunov function as a convex feasibility problem, and many efficient solvers exist to solve such problems \cite{parrilo2000}. Additionally, for systems which are known to be stable, the computation of high-order polynomial Lyapunov functions has the ability to help characterize invariant regions of the state space with complex geometries; this is not possible when computing quadratic Lyapunov functions.

This work provides an algorithm for constructing homogeneous polynomial Lyapunov functions for switched linear systems. The aforementioned algorithm searches for polynomial Lyapunov functions through a convex feasibility problem, however, the structure of our algorithm differs significantly from traditional sum-of-squares formulations.  Specifically, we encode the search for polynomial Lyapunov functions as a search for quadratic Lyapunov functions for a related hierarchy of Lyapunov differential equations. This creates an intuitive procedure for checking the stability properties of switched linear systems and enables new applications as well \cite{yoon2019}.
Moreover, we show that every homogeneous sum-of-squares polynomial Lyapunov function for a given initial system can be transformed to a quadratic polynomial Lyapunov function for a system in the related hierarchy; this procedure can also be conducted in the reverse order, allowing one to generate sum-of-squares polynomial Lyapunov functions for an initial system through the identification of a quadratic polynomial Lyapunov function for a related system.

This paper is organized in the following way.  We review common analysis tools for assessing the stability of switched linear systems in Section II. Specifically, we introduce a time-varying Lyapunov differential equation, which we define in reference to an initial switched linear system.  Using the time-varying Lyapunov differential equation as an initial case, we then form a hierarchy of Lyapunov differential equations in Section III. Quadratic Lyapunov functions for differential equations in this hierarchy are shown to correspond to homogeneous polynomial Lyapunov functions for the initial switched system later in the same section.  Section IV explores the relation between quadratic Lyapunov functions for the aforementioned hierarchy of Lyapunov differential equations and homogeneous sum-of-squares polynomial Lyapunov functions for the initial switched linear system.
Finally, we provide an algorithm, formulated as a convex optimization problem, for computing high-order homogeneous polynomial Lyapunov functions for switched linear systems in Section V; this algorithm is presented in conjunction with a numerical example.


\section{Stability and Switched Linear Systems}
\subsection{Preliminaries}
Consider the linear time-variant system
\begin{equation}
  \dot{x}=A(t)x,
  \label{eqn1}
\end{equation}
\noindent{}where $x(t)\in\mathbb{R}^n$ denotes the system state, and $A(t) \in\R^{n\times n}$ evolves nondeterministically inside a finite set of switched linear modes $A(t) \in \left\{A_1,\, \cdots,\, A_N\right\}$. We assume that each of the switched modes $\dot{x} = A_i x$, with $i\in\{1,\cdots,N\}$, converges asymptotically to the origin for all initial conditions $x(0) \in \R^n$.

Importantly, the asymptotic stability of each mode does not, by itself, imply the asymptotic stability of the system \eqref{eqn1} under arbitrary switching; see \cite{liberzon2003switching} Chapter 2 for further details.
As such, more complex techniques are required to analyze the stability of \eqref{eqn1}. 

In this work, we consider a traditional approach for stability analysis for switched linear systems, involving the search for a common polynomial Lyapunov function that stabilizes each switched mode (Definition \ref{def:clf}).

\begin{definition}\label{def:clf}
A common Lyapunov function for the system \eqref{eqn1} is a mapping $V: \mathbb{R}^n \rightarrow \mathbb{R}$ such that 
\begin{equation}
    \label{LyapCond}
    \begin{array}{rcl}
    V(x) &>& 0 \\
    \dot{V}(x)= \langle \nabla V, A_i x \rangle &<& 0 \\
    \forall x & \neq & 0  \\
    \forall i\,\: & \in  &\{1,\,\cdots,\, N\}.
    \end{array}
\end{equation}
\end{definition}

It is well known that the system \eqref{eqn1} is stable if and only if there exists a Lyapunov function $V(x)$ which satisfies \eqref{LyapCond}.
 Moreover, the authors of \cite{CPLF} show that \eqref{eqn1} is stable if and only if there exists a common \textit{homogeneous} polynomial Lyapunov function which proves stability of each mode.  We capture this assertion in Remark \ref{remark1}.

\begin{remark}\label{remark1}
\cite[Theorem 4.5]{6161493} If the switched linear system \eqref{eqn1} is asymptotically stable under arbitrary switching, then there exists a polynomial Lyapunov function $V(x)$, satisfying \eqref{LyapCond}, which is homogeneous in the entries of $x$.
\end{remark}


\subsection{Quadratic Lyapunov Functions for Switched Systems}\label{2b}
Reconsider the system \eqref{eqn1}. In the special instance that there exists a $V(x)$, satisfying \eqref{LyapCond}, which is quadratic in the entries of $x$, we say that the system \eqref{eqn1} is \textit{quadratically stable} \cite{Barmish1985}.
Such a Lyapunov function will take the form
\begin{equation*}
    V(x) = x^TPx 
\end{equation*}
where $P \in \mathbb{R}^{n\times n}$ is a symmetric positive definite matrix and
\begin{equation}
    A_i^T P + P A_i \,<\, 0
    \label{eq:eqn3}
\end{equation}
for all $i \,\in\, \{1,\,\cdots,\,N\}$.
Alternatively, one can show that the system \eqref{eqn1} is quadratically stable by showing that there exists a symmetric positive definite $Q\in \mathbb{R}^{n\times n}$ that satisfies
\begin{equation}
A_i Q + Q A_i^T \,<\, 0
\label{eqn4}
\end{equation}
for all $i \,\in\, \{1,\,\cdots,\,N\}$ \cite{Barmish1985}; in this case,
\begin{equation*}
    V(x) = x^T Q^{-1} x
\end{equation*}
is a quadratic Lyapunov function for the system \eqref{eqn1}.

Quadratic polynomial Lyapunov functions are the simplest substantiation of homogeneous polynomial Lyapunov functions, and thus, the search for a quadratic Lyapunov function for \eqref{eqn1} has computational advantages in comparison to other strategies for stability analysis; the search can be reduced to solving a convex feasibility problem involving linear matrix inequalities, and many efficient solvers exist to solve such problems \cite{BEFB:94, SeDuMi}. 
Recent progress in polynomial optimization systems via sum-of-squares relaxations, however, has shown that more general polynomial Lyapunov functions could be computed as well with added benefits, such as improved system stability margins.

Importantly, if the system \eqref{eqn1} is linear time-invariant, i.e. $N = 1$, then \eqref{eqn1} is asymptotically stable if and only if there exists a $P, Q \in \mathbb{R}^{n\times n}$ satisfying \eqref{eq:eqn3} and \eqref{eqn4}, respectively. This is not true, however, in the case of multiple switched modes; stable switched linear systems exist for which there is no quadratic Lyapunov function certifying the stability of each mode  \cite[Section 3]{rantzer1997}.  For this reason, we must resort to more complex tools to prove stability in the general setting of \eqref{eqn1}.


\subsection{The Lyapunov Differential Equation}
We next present the time-variant switched Lyapunov differential equation:
\begin{equation}
\dot{X} = A(t)X + X A(t)^T,
\label{eqn5}
\end{equation}
where $X(t)\in \mathbb{R}^{n\times n}$ and $A(t)$ retains its definition from \eqref{eqn1}.
In this work, we primarily use the Lyapunov differential equation \eqref{eqn5} as a stability analysis tool for the initial switched system \eqref{eqn1}. As is shown in the following proposition, \eqref{eqn5} is stable if and only if \eqref{eqn1} is stable; moreover, stability guarantees on the Lyapunov differential equation propagate down to stability guarantees on the initial system.

\begin{proposition}\label{prop1}
The switched Lyapunov differential equation \eqref{eqn5} is stable if and only if the system \eqref{eqn1} is stable.
\end{proposition}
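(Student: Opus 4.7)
The plan is to exploit the explicit relation between solutions of \eqref{eqn1} and solutions of \eqref{eqn5} through the state transition matrix of the underlying switched linear system, which yields both directions of the equivalence almost directly.

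First, I would fix an arbitrary (measurable) switching signal $A(t)$ taking values in $\{A_1,\dots,A_N\}$ and let $\Phi(t,0)$ denote its state transition matrix, so that every solution of \eqref{eqn1} satisfies $x(t) = \Phi(t,0)\,x(0)$. A direct differentiation shows that, for any initial matrix $X_0 \in \mathbb{R}^{n\times n}$, the function
\begin{equation*}
X(t) = \Phi(t,0)\, X_0\, \Phi(t,0)^T
\end{equation*}
is the unique solution of \eqref{eqn5} with $X(0) = X_0$, because
\begin{equation*}
\dot{X}(t) = A(t)\Phi(t,0) X_0 \Phi(t,0)^T + \Phi(t,0) X_0 \Phi(t,0)^T A(t)^T = A(t) X(t) + X(t) A(t)^T.
\end{equation*}
This closed-form representation is the backbone of the argument.

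Next I would prove the \emph{only if} direction (stability of \eqref{eqn1} implies stability of \eqref{eqn5}). If \eqref{eqn1} is (uniformly, asymptotically) stable under arbitrary switching, then for every admissible $A(t)$ we have $\|\Phi(t,0)\| \to 0$ as $t \to \infty$, uniformly in the switching signal. Submultiplicativity of the induced matrix norm gives $\|X(t)\| \le \|\Phi(t,0)\|^2\,\|X_0\|$, so $X(t) \to 0$ for every $X_0$, uniformly in the switching signal, which is exactly the stability of \eqref{eqn5}. For the \emph{if} direction, I would use rank-one initial conditions: pick any $x_0 \in \mathbb{R}^n$ and set $X_0 = x_0 x_0^T$. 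Then
\begin{equation*}
X(t) = \Phi(t,0)\, x_0 x_0^T\, \Phi(t,0)^T = x(t)\, x(t)^T,
\end{equation*}
so $\|x(t)\|^2 = \operatorname{tr}(X(t)) \le n\,\|X(t)\|$. Stability of \eqref{eqn5} forces $X(t) \to 0$, and hence $x(t) \to 0$ for every $x_0$ and every switching signal, establishing stability of \eqref{eqn1}.

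The conceptual obstacle is not algebraic but definitional: one must be precise about what ``stability'' means for a switched system and, more subtly, for \eqref{eqn5}, since the latter lives in a higher-dimensional matrix space but is driven by the \emph{same} switching signal. In particular, the equivalence is uniform over switching signals only because the transition matrix $\Phi$ that appears in both problems is literally the same object; once that is made explicit, the rank-one lifting $x_0 \mapsto x_0 x_0^T$ is the only non-routine step, and it suffices because rank-one positive semidefinite matrices generate the full state space of \eqref{eqn5} through the linearity of the flow.
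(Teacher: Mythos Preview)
Your argument is correct and follows essentially the same route as the paper: both use the rank-one lifting $X=xx^{T}$ to pass from stability of \eqref{eqn5} to stability of \eqref{eqn1}, and both exploit linearity of the flow for the converse, with your closed-form $X(t)=\Phi(t,0)\,X_0\,\Phi(t,0)^{T}$ being a compact repackaging of the paper's dyadic decomposition $X_0=\sum_j p_{j,0}q_{j,0}^{T}$ evolved along \eqref{eqn1}. One minor slip: relative to the statement ``\eqref{eqn5} is stable if and only if \eqref{eqn1} is stable,'' you have the ``if'' and ``only if'' labels interchanged, though the mathematics in each direction is correct.
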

\begin{proof}
{\Large $\bigcirc$}\hspace{-4.5mm}$\Rightarrow$  Assume the system \eqref{eqn5} is stable, and let $X=xx^T$. Then 
\begin{equation*}
\begin{split}
    \dot{X} 
    & = \dot{x}x^T + x \dot{x}^T \\
    & = A(t)xx^T + xx^TA(t)^T \\
    & = A(t)X+X A(t)^T.
\end{split}
\end{equation*}
Therefore $X=xx^T$ converges to zero, which implies $x$ converges to zero along trajectories of \eqref{eqn1}.

\noindent{}{\Large $\bigcirc$}\hspace{-4.5mm}$\Leftarrow$ Assume the system \eqref{eqn1} is stable, and define $X(t)\in \mathbb{R}^{n\times n}$ with initial condition $X(0) = X_0$. Any matrix can be written as the sum of diads; therefore, there exist $p_{1,0},\, \cdots,\, p_{N,0}, q_{1,0},\, \cdots,\, q_{N,0} \in\mathbb{R}^n$ such that
    \begin{equation*}
        X_0 = \sum_{j=1}^N p_{j,0}q_{j,0}^T.
    \end{equation*}
Next, consider the $2N$ trajectories that satisfy 
\begin{equation}
\frac{d}{dt}p_j(t) = A(t) p_j(t), \quad p_j(0) = p_{j,0},
\label{diff_eq_1}
\end{equation}
\begin{equation}
\frac{d}{dt}q_j(t) = A(t) q_j(t), \quad q_j(0) = q_{j,0}. 
\label{diff_eq_2}
\end{equation}
where $j \in \{1,\, \cdots,\, N\}$, and note that if \eqref{eqn1} is stable then \eqref{diff_eq_1} and \eqref{diff_eq_2} converge to zero.

Taking $X=\sum_{j=1}^N p_j(t)q_j(t)^T$ then yields
\begin{equation*}
    \begin{split}
        \dot{X}(t)
        & = \sum_{j=1}^N\Big{(}\dot{p}_j q_j^T + p_j\dot{q}_j^T\Big{)}\\
        & = A(t)\bigg{(}\sum_{j=1}^N p_j q_j^T\bigg{)} + \bigg{(}\sum_{j=1}^N p_j q_j^T\bigg{)}A(t)^T\\
       & = A(t) X(t) + X(t) A(t)^T.
    \end{split}
\end{equation*}
Therefore, $X=\sum_{j=1}^N p_j(t)q_j(t)^T$ is a (unique) solution to the differential equation \eqref{eqn5} with initial condition $X_0$, and $p_j(t)$ and $q_j(t)$ are stable for all $j\in \{1,\,\cdots,\,N\}$. Therefore, $X(t)$ also converges along trajectories of \eqref{eqn5}.
\end{proof}


\section{Establishing a Hierarchy of Lyapunov Differential Equations}

In this section we build on \eqref{eqn5} to create a hierarchy of Lyapunov differential equations for the system \eqref{eqn1}. As was the case in Proposition \ref{prop1}, each system in the hierarchy is shown to have equivalent stability properties.


\subsection{Notation}
Let $A \otimes B \in \R^{np\times mq}$ denote the Kronecker product of $A \in \R^{n\times m}$ and $B \in \R^{p \times q}$. Let $\otimes^k x \in \R^{n^k}$ denote the $k^{\text{th}}$ Kronecker power of $x \in \R^n$, which is defined recursively by
\begin{equation*}
    \begin{array}{rllc}
         \otimes^1 x & = x & \in \R^n, & \\
         \otimes^{k} x & = x \otimes (\otimes^{k-1} x) & \in \R^{n^k}, & k \geq 2.
    \end{array}
\end{equation*}
Let $W^+ \in \R^{m\times n}$ denote the Moore-Penrose inverse of $W \in \R^{n\times m}$, and let $I_n \in \R^{n \times n}$ denote the $n \times n$ identity matrix.


\subsection{Identifying Meta-Lyapunov Functions}\label{threeA}

We first rewrite \eqref{eqn5} as
\begin{equation}
\dot{\vec{X}} = {\cal A}(t) \vec{X}
\label{eqn8}
\end{equation}
by taking $\vec{X}$ to be the vectorization of $X$, i.e. $\vec{X} = \mbox{vec}(X) \in \mathbb{R}^{n^2}$. In this case, $\A(t) \in \R^{n^2 \times n^2}$ evolves nondeterministically in the set $\A(t) \in  \{\A_1,\cdots, \A_N\}$, where $\A_i$ is defined by 
\begin{equation*}
    \A_i := I_n \otimes A_i + A_i \otimes I_n
\end{equation*}
for $i \in \{1,\,\cdots,\, N\}$.

For convenience, we refer to \eqref{eqn8}, which is also linear time-variant, as the {\em meta-system} relative to system~\eqref{eqn1}.
Applying concepts of quadratic stability to meta-systems, the system \eqref{eqn8} is stable if there exists a positive definite $P\in \mathbb{R}^{n^2 \times n^2}$ such that 
\begin{equation} \label{eqn9}
    \mathcal{A}_i^T P + P \mathcal{A}_i < 0 ,
\end{equation}
for all $i \in \{1,\, \cdots,\, N\}$.  These constraints correspond to the existence of a Lyapunov function $\mathcal{V}(\vec{X}) = \vec{X}^T P\vec{X}$ for \eqref{eqn8}, which is quadratic in the entries of $\vec{X}$.  In what follows, we refer to $\mathcal{V}(\vec{X})$ as a \textit{meta-Lyapunov function} for the system \eqref{eqn1}, and we formalize the search for such a meta-Lyapunov function as the main inquiry of the section.
\begin{problem}\label{prob1}
Given a system \eqref{eqn1}, which is known to be stable, find a positive definite matrix $P \in \mathbb{R}^{n^2 \times n^2}$ that satisfies \eqref{eqn9}.
\end{problem}

As was shown in Proposition \ref{prop1}, if the system \eqref{eqn1} is stable, then there must be a Lyapunov function $\mathcal{V}(\vec{X})$ that certifies the stability of the meta-system \eqref{eqn8}; this Lyapunov function, however, need not be quadratic.  In what follows, we show that in the special instance that \eqref{eqn1} is quadratically stable, there must exist a quadratic Lyapunov function certifying the stability of the meta-system \eqref{eqn8}, and moreover, there must be a $P \in \R^{n^2 \times n^2}$ that solves Problem \ref{prob1}.  We capture this assertion in the following theorem.

\begin{theorem}\label{trm1}
If the system \eqref{eqn1} is quadratically stable, then the system \eqref{eqn8} is also quadratically stable. 
\end{theorem}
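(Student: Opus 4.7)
The plan is to construct an explicit quadratic Lyapunov certificate for the meta-system out of the given quadratic certificate for the original system. By hypothesis, there exists a symmetric $P_0 \succ 0$ with
\begin{equation*}
Q_i \,:=\, A_i^T P_0 + P_0 A_i \,\prec\, 0 \quad \text{for all } i \in \{1,\ldots,N\}.
\end{equation*}
My candidate for the meta-system is $\mathbf{P} := P_0 \otimes P_0 \in \mathbb{R}^{n^2 \times n^2}$. Positive definiteness of $\mathbf{P}$ is immediate from the fact that the spectrum of a Kronecker product is the set of pairwise products of the spectra of the factors: since all eigenvalues of $P_0$ are positive, so are those of $P_0\otimes P_0$.

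The core step is to verify the matrix inequality $\mathcal{A}_i^T \mathbf{P} + \mathbf{P}\mathcal{A}_i \prec 0$ for each $i$. I would expand $\mathcal{A}_i = I_n \otimes A_i + A_i \otimes I_n$ and apply two standard Kronecker identities: the transpose rule $(A\otimes B)^T = A^T \otimes B^T$ and the mixed-product rule $(A\otimes B)(C\otimes D) = AC\otimes BD$. Using these, the four resulting cross-terms collapse pairwise into
\begin{equation*}
\mathcal{A}_i^T (P_0\otimes P_0) + (P_0\otimes P_0)\mathcal{A}_i \;=\; P_0 \otimes Q_i \,+\, Q_i \otimes P_0 .
\end{equation*}

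To finish, I would argue that each of the two summands on the right is negative definite. Since $P_0 \succ 0$ has only positive eigenvalues and $Q_i \prec 0$ has only negative eigenvalues, the eigenvalues of $P_0 \otimes Q_i$ (all products of one from each spectrum) are strictly negative, and likewise for $Q_i \otimes P_0$; the sum of two symmetric negative definite matrices is negative definite. Combined with $\mathbf{P} \succ 0$, this exhibits $\mathcal{V}(\vec X) = \vec X^T (P_0 \otimes P_0)\vec X$ as a quadratic meta-Lyapunov function, which is exactly quadratic stability of \eqref{eqn8}.

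I do not anticipate a real obstacle: the tensor structure of $\mathcal{A}_i$ matches that of the candidate $P_0 \otimes P_0$ so perfectly that the algebra essentially mirrors the scalar Lyapunov computation. The only place to be careful is bookkeeping with the mixed-product rule so that the cross-terms really do factor as $P_0\otimes Q_i + Q_i \otimes P_0$ rather than some incompatible mixture; since $P_0$ is symmetric this is painless.
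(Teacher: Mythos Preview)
Your proposal is correct and uses the same candidate $P_0\otimes P_0$ as the paper, arriving at the identical decomposition $\mathcal{A}_i^T(P_0\otimes P_0)+(P_0\otimes P_0)\mathcal{A}_i = P_0\otimes Q_i + Q_i\otimes P_0$. The only difference is in the last step: you conclude negative definiteness directly from the spectrum of a Kronecker product (products of a positive and a negative eigenvalue are negative), whereas the paper instead rewrites the quadratic form via the vec--trace identities $\vec X^T(M\otimes N)\vec X = \mathrm{trace}(X^T N X M)$ and argues that $\mathrm{trace}\big(Q^{1/2}X^T Q_i X Q^{1/2}\big)<0$ for $X\neq 0$. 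Your spectral argument is the shorter of the two and avoids the trace manipulations; the paper's route has the minor advantage of making the connection to the matrix variable $X$ of the Lyapunov differential equation more explicit.
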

\begin{proof}
Assume there exists of a quadratic Lyapunov function $V(x) = x^TQx$ for \eqref{eqn1}, and pick $P = Q\otimes Q$. Indeed $P$ is positive definite in the instance $Q$ is positive definite.  Moreover,
\begin{equation*}
    \mathcal{V}(\vec{X}) \;:=\;  \vec{X}^T P\vec{X} \;>\; 0
\end{equation*}
for all nonzero $\vec{X} \in \mathbb{R}^{n^2}$.

We next show that $\mathcal{V}$ decreases along the trajectories of \eqref{eqn8}. From \eqref{eqn8} we have
\begin{equation*}
    \dot{\mathcal{V}}(\vec{X}) \;=\;  \vec{X}^T\Big{(}\mathcal{A}(t)^TP + P\mathcal{A}(t)\Big{)}\vec{X}.
\end{equation*}
Further, we calculate
\begin{equation*}
    \begin{split}
       \mathcal{A}(t)^T P 
        & = (I_n \otimes A(t)^T + A(t)^T \otimes I_n) Q\otimes Q \\
        & = Q \otimes (A(t)^TQ) + (A(t)^TQ) \otimes Q
    \end{split}
\end{equation*}
and
\begin{equation*}
    \begin{split}
        P \mathcal{A}(t)
        &= Q\otimes Q(I_n \otimes A(t) + A(t) \otimes I_n) \\
        &= Q \otimes (QA(t)) + (QA(t)) \otimes Q.
    \end{split}
\end{equation*}
Grouping terms then yields
\begin{multline*}
    \mathcal{A}(t)^T P + P \mathcal{A}(t)= Q \otimes (A(t)^TQ + QA(t)) + \cdots\\
    \qquad\qquad+ (A(t)^TQ + QA(t)) \otimes Q.
\end{multline*}

We now check that $\dot{\mathcal{V}}(\vec{X})$ is negative for all nonzero $\vec{X}\in \R^{n^2}$.
To that end, note that
\begin{multline*}
        \Vec{X}^T\Big{(}Q \otimes (A(t)^TQ + QA(t))\Big{)}\Vec{X} =\\
        \begin{array}{l}
            \cdots=  \Vec{X}^T \mbox{vec}\Big{(}(A(t)^TQ+QA(t))XQ\Big{)}\\
            \cdots=  \mbox{trace}\Big{(} X^T(A(t)^TQ+QA(t))XQ\Big{)} \\
            \cdots = \mbox{trace}\Big{(} Q^{1/2}X^T(A(t)^TQ+QA(t))XQ^{1/2}\Big{)},
        \end{array}
\end{multline*}
and
\begin{multline*}
        \Vec{X}^T\Big{(}(A(t)^TQ + QA(t)) \otimes Q\Big{)}\Vec{X} =\\
        \begin{array}{l}
             \cdots= \Vec{X}^T \mbox{vec}\Big{(}QX(A(t)^TQ+QA(t))\Big{)}\\
             \cdots= \mbox{trace}\Big{(} X^TQX(A(t)^TQ+QA(t))\Big{)} \\
             \cdots = \mbox{trace}\Big{(} Q^{1/2}X^T(A(t)^TQ+QA(t))XQ^{1/2}\Big{)}
        \end{array}
\end{multline*}
Since $A(t)^TQ+QA(t)$ is negative semidefinite, so is $Q^{1/2}X^T(A^TQ+QA)XQ^{1/2}$, and its trace is negative. 
Thus $\dot{\mathcal{V}}(\vec{X})$ is negative for all nonzero $\vec{X}\in \R^{n^2}$, and moreover, $\mathcal{V}(\vec{X}) = \vec{X}^T (Q\otimes Q)\vec{X}$ is a quadratic Lyapunov function certifying the stability of the meta system \eqref{eqn8}.
Additionally, this result confirms that $P = Q\otimes Q$ solves problem \ref{eqn1}.

\end{proof}

It is of course possible to repeat the process again and certify stability at a deeper level; for instance, one may form the Lyapunov differential equation corresponding to \eqref{eqn8},
\begin{equation}\label{metameta}
    \frac{d}{dt} \xi = (I\otimes \mathcal{A}(t)+\mathcal{A}(t)\otimes I) \xi,
\end{equation}
$\xi \in \R^{n^4}$ and then show that
\begin{equation*}
    V(\xi) = \xi^T (Q\otimes Q\otimes Q\otimes Q)\xi
\end{equation*}
is a quadratic Lyapunov function for the new meta-system \eqref{metameta}.
Pursuing the process further, it is possible to construct a ``hierarchy" of Lyapunov differential equations whose state space dimensions are $n^{2^c}$, where $c$ is an integer greater than or equal to $1$.
In the following section, we complete this hierarchy to include Lyapunov differential equations whose state space dimensions grow as $n^{2 c}$.


\subsection{A Linear Hierarchy of Polynomial Lyapunov Functions}
We next develop a hierarchy of dynamical systems whose state space dimensions grow as integer exponents of $n$, the dimension of the state space of \eqref{eqn1}.  This hierarchy complements the hierarchy of systems discussed above.


\begin{theorem}\label{thrm:two}
System \eqref{eqn1} is stable if there exists $c\in \mathbb{N}_{\geq 1}$ and $P_c \in \mathbb{R}^{n^{c} \times n^{c}}$ positive definite such that
\begin{equation}
    \mathcal{A}_{c,i}^T P_c + P_c\mathcal{A}_{c,i} < 0
    \label{eqn11}
\end{equation}
for all $i \in \{1,\,\cdots,\,N\}$, where 
\begin{equation}\label{eqn12}
    \mathcal{A}_{c,i} := \sum_{ j = 0}^{c-1} I_{n^j} \otimes A_i \otimes I_{n^{c-1-j}}.
\end{equation}
\end{theorem}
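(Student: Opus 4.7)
The plan is to exhibit an explicit polynomial Lyapunov function for \eqref{eqn1} built from the matrix $P_c$. The natural candidate is the degree-$2c$ homogeneous form
\begin{equation*}
    V(x) \;:=\; (\otimes^c x)^T\, P_c\, (\otimes^c x),
\end{equation*}
where $\otimes^c x \in \mathbb{R}^{n^c}$ is the $c$-th Kronecker power from the notation subsection. Positivity is immediate: since $P_c$ is positive definite on $\mathbb{R}^{n^c}$, $V(x) > 0$ whenever $\otimes^c x \neq 0$, and it is elementary to check that $\otimes^c x = 0$ if and only if $x = 0$ (every entry of $\otimes^c x$ is a monomial of degree $c$ in the entries of $x$, including $x_1^c,\dots,x_n^c$).

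The key step is the identity
\begin{equation*}
    \frac{d}{dt}\bigl(\otimes^c x\bigr) \;=\; \mathcal{A}_{c,i}\,\bigl(\otimes^c x\bigr)
    \qquad\text{whenever}\qquad \dot x = A_i x.
\end{equation*}
I would prove this by induction on $c$. The base case $c=1$ is just $\dot x = A_i x = \mathcal{A}_{1,i} x$. For the induction step, apply the Leibniz-type product rule to $\otimes^c x = x \otimes (\otimes^{c-1} x)$, use the induction hypothesis on the second factor, and use the mixed-product identity $(A\otimes B)(C\otimes D) = (AC)\otimes(BD)$ to rewrite
\begin{equation*}
    (A_i x)\otimes(\otimes^{c-1}x) + x \otimes \bigl(\mathcal{A}_{c-1,i}(\otimes^{c-1}x)\bigr)
\end{equation*}
as $(A_i \otimes I_{n^{c-1}})(\otimes^c x) + (I_n \otimes \mathcal{A}_{c-1,i})(\otimes^c x)$. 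Expanding $I_n \otimes \mathcal{A}_{c-1,i}$ according to the defining sum in \eqref{eqn12} and reindexing gives exactly $\mathcal{A}_{c,i}(\otimes^c x)$.

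With this identity in hand, differentiating $V$ along any trajectory of \eqref{eqn1} in the mode $A(t)=A_i$ gives
\begin{equation*}
    \dot V(x) \;=\; (\otimes^c x)^T\bigl(\mathcal{A}_{c,i}^T P_c + P_c\mathcal{A}_{c,i}\bigr)(\otimes^c x),
\end{equation*}
which is strictly negative for every $x\neq 0$ by \eqref{eqn11}, since $\otimes^c x \neq 0$. Thus $V$ is a common homogeneous polynomial Lyapunov function satisfying Definition \ref{def:clf}, and stability of \eqref{eqn1} under arbitrary switching follows by the standard argument recalled in Remark \ref{remark1}.

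The only nontrivial part is the Kronecker-power differentiation identity; everything else is either immediate from the positive-definiteness of $P_c$ or follows directly from the hypothesis \eqref{eqn11}. Note that this generalizes both Proposition \ref{prop1} (the case $c=1$ for the meta-system viewpoint, after vectorization) and Theorem \ref{trm1}, where the specific choice $P_c = Q^{\otimes c}$ would reproduce the quadratic-to-meta lifting used there.
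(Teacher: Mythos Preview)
Your proposal is correct and follows essentially the same approach as the paper: both hinge on the identity $\frac{d}{dt}(\otimes^c x)=\mathcal{A}_{c,i}(\otimes^c x)$ and then read off that $V(x)=(\otimes^c x)^T P_c(\otimes^c x)$ is a common Lyapunov function. Your version is in fact more complete, since you supply the inductive proof of the Kronecker-power differentiation identity that the paper merely asserts.
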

\begin{proof}
Taking $\vec{X} = \otimes^{c}x(t) \in \mathbb{R}^{n^c}$, we find
\begin{equation}\label{eqn13}
        \dot{\vec{X}} = \mathcal{A}_c(t) \vec{X}
\end{equation}
where $\mathcal{A}_c$ is given by \eqref{eqn11}, and the stability of system \eqref{eqn13} implies that of System \eqref{eqn1}.  Therefore System \eqref{eqn1} is stable if there exists a positive definite $P_c \in \mathbb{R}^{n^{c} \times n^{c}}$ such that \eqref{eqn11} holds.
\end{proof}

Theorem \ref{thrm:two} shows that the existence of a $P_c \in \mathbb{R}^{n^c\times n^c}$ satisfying \eqref{eqn11} for some integer $c \geq 1$ certifies the stability of \eqref{eqn1}; such a $P_c$ identifies 
\begin{equation}\label{eqn14}
    V_c(x) = \big{(}\otimes^{c}x(t)^T\big{)} P_c \big{(}\otimes^{c}x(t)\big{)}
\end{equation}
as a polynomial Lyapunov function for \eqref{eqn1}, which is homogeneous in the entries of $x$ and of order $2c$.
Importantly, the degree of $V_c(x)$ grows linearly with $c$.

\subsection{Reducing the Dimensionality of the Meta-System}
The benefits of searching for meta-Lyapunov functions for \eqref{eqn1} using the methods presented thus far are namely structural; \eqref{eqn11}-\eqref{eqn12} provide an intuitive procedure for generating high-order homogeneous polynomial Lyapunov function for \eqref{eqn1} and moreover, this procedure does not require any heavy machinery to implement.
In contrast, there are few computational advantages to this approach, at present.
This is due in part to internal redundancy built into the Lyapunov constraints given by \eqref{eqn11}.  We demonstrate this assertion through the following example.

\begin{example}\label{example2}
Consider, for example, the system \eqref{eqn1} evolving in $\R^2$.  In this case, $x = [x_1,\, x_2]^T \in \R^2$, and $\vec{X} := x \otimes x \in \R^4$ is given by
\begin{equation}\label{exsys}
    \vec{X} = \begin{bmatrix}x_1^2 & x_1 x_2& x_1 x_2 & x_2^2\end{bmatrix}^T.
\end{equation}
When beginning at an initial condition $\vec{X}(0) = x(0) \otimes x(0)$ and evolving along trajectories of the meta-system
\begin{equation*}
    \dot{\vec{X}} = (I_2\otimes A(t) + A(t)\otimes I_2) \vec{X},
\end{equation*}
we find that the second and third entries of $\vec{X}$ remain equal to one another, regardless of the switching policy. This is due to the construction of $(I_2\otimes A(t) + A(t)\otimes I_2)$. 

The methods presented thus far address the problem of searching for a meta-Lyapunov function $\mathcal{V}(\vec{X}) = \vec{X}^T P \vec{X}$ for the system \eqref{exsys}; the specific choice of $P\in \R^{4\times 4}$ will then correspond to a homogeneous polynomial Lyapunov function $V_2(x) = (\otimes^2 x)^T P (\otimes^2 x)$ for the system \eqref{eqn1}.
Here, it is apparent that the constraints on $P$, given by \eqref{eqn11}, contain internal redundancy; note, for instance, that one must compute the $10$ unique entries of $P\in \R^{4\times 4}$ in order to find $\mathcal{V}(\vec{X})$, whereas, the resulting Lyapunov function $V_2(x)$ will only be defined by $5$ unique monomials.

Now, consider a vector containing the second-order monomials of $x$, this time with no redundancy.  Specifically, consider $y(x) = [x_1^2,\,x_1x_2,\, x_2^2] \in \R^3$, and note that $\vec{X} = W y(x)$ where
\begin{equation*}
    W = 
    \begin{bmatrix}
    1 &0 & 0 \\ 0 & 1& 0 \\ 0& 1& 0 \\ 0& 0& 1
    \end{bmatrix}.
\end{equation*}
Using \eqref{eqn13} as a basis, the dynamics of $y(x)$ can be captured in closed form:
\begin{equation*}
    \dot{y} = W^+ \A_c(t) W y.
\end{equation*}
Therefore, one can now formulate the search for a fourth-order homogeneous polynomial Lyapunov function for \eqref{eqn1}, as the search for a quadratic Lyapunov function $\overline{V}(y) = y^T \overline{P}y$ that certifies the stability of $y$.  In this case, the resulting Lyapunov function will have the same number of terms, i.e. 5 distinct monomials, however this search will only require the identification of the $6$ unique entries of $\overline{P} \in \R^{3 \times 3}$.
\qed
\end{example}

As shown in the previous example, the constraints given by \eqref{eqn11} are redundant; that is, a quadratic Lyapunov function that certifies the stability of $\vec{X}$, as in \eqref{eqn13}, will individually certify the stability of each of the meta-system's states, whereas, a reduced order meta-Lyapunov function that stabilizes a subset of meta-system's states may be sufficient.

For this reason, we present a new formulation of the constraints \eqref{eqn11}-\eqref{eqn12} that contains no redundancy.  We begin with the following definition.

\begin{definition}[$\A_c$-Invariant Subspaces]
A subspace $S \subset \R^n$ is said to be \textit{$\A_c$-invariant} for \eqref{eqn13} if for every vector $v \in S$ and every matrix $\A_{c, i}$ with $i \in \{1,\,\cdots,\, N\}$ we have $\mathcal{A}_{c,\,i} v \in S$.
\end{definition}

Note that $\A_{c}(t)$ as in \eqref{eqn13} will have an inherent invariant subspace, resulting from its construction.  We therefore remove this redundancy by analysing a reduced order meta-system, whose states correspond to unique monomials of the initial switched system \eqref{eqn1}. While the initial meta-Lyapunov conditions \eqref{eqn11} are defined by $n^{2c}$ constraints per switched mode, our new formulation only requires $M(n,\, c)^2$ such constraints, where $M(n,\, c)$ denotes the number of monomials of order $c \in \N_{\geq 1}$ in the entries of $x \in \R^n$ and is given by 
\begin{equation*}
    M(n,\, c) = {c + n - 1 \choose n - 1}.
\end{equation*}
This result is encapsulated in the following theorem.

\begin{theorem}\label{thrm3}
Let $y_c(x) \in \R^{M(n,\, c)}$ denote a vector containing the monomials of $x$ of order $c$, which we define in conjunction with a matrix $W_c \in \R^{n \times M(n,\, c)}$:
\begin{equation}\label{eqn16}
    \otimes^c x = W_c y_c(x).
\end{equation}
Additionally, let $\overline{P}_c \in R^{M(n,\,c) \times M(n,\,c)}$ be symmetric positive definite. If
\begin{equation}\label{eqn17}
    B^T_{c,\, i} \overline{P}_c + \overline{P}_c B_{c, i} < 0
\end{equation}
for all $i \in \{1,\, \cdots,\ N\}$ where
\begin{equation}\label{eqn18}
    B_{c,\, i} := W_c^{+} \A_{c,\, i} W_c,
\end{equation}
then $V_c(x) = y_c(x)^T \,\overline{P}\, y_c(x)$ is a homogeneous polynomial Lyapunov function for \eqref{eqn1} of order $2c$.
\end{theorem}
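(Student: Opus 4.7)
The plan is to reduce the claim to a direct Lyapunov computation on the reduced coordinates $y_c(x)$, once a closed-form linear dynamics for $y_c(x(t))$ is established. Three things need to be checked: (i) $V_c$ is positive on $\R^n \setminus \{0\}$ and homogeneous of degree $2c$; (ii) along trajectories of the switched system, $\dot{y}_c(x(t)) = B_{c,i}\, y_c(x(t))$ whenever the active mode is $A_i$; (iii) the Lyapunov decrease $\dot{V}_c(x) < 0$ then follows from \eqref{eqn17}.

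Parts (i) and (iii) are routine once (ii) is available. For (i), the entries of $y_c(x)$ are the distinct degree-$c$ monomials in $x_1, \ldots, x_n$, so $y_c(x)$ vanishes only at $x = 0$; positive definiteness of $\overline{P}_c$ then gives $V_c(x) > 0$ for every nonzero $x$, and the scaling $y_c(\lambda x) = \lambda^c y_c(x)$ makes $V_c$ homogeneous of degree $2c$. For (iii), given (ii), a direct computation yields $\dot{V}_c = y_c^T (B_{c,i}^T \overline{P}_c + \overline{P}_c B_{c,i}) y_c$, which is negative for $y_c \neq 0$ by \eqref{eqn17}, and hence for every $x \neq 0$, so $V_c$ satisfies the conditions in \eqref{LyapCond}.

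The main technical step is (ii). Two ingredients are needed. First, $W_c$ has full column rank: each of its rows has exactly one nonzero entry (each position of $\otimes^c x$ equals one definite monomial), so the columns have pairwise disjoint supports and are linearly independent; consequently $W_c^+ W_c = I_{M(n,c)}$. Second, $\mathrm{Range}(W_c)$ coincides with the symmetric subspace of $\R^{n^c}$, and $\A_{c,i}$ leaves this subspace invariant: from \eqref{eqn12}, each summand replaces one factor of $x$ in $\otimes^c x$ by $A_i x$, and summing over the position of the replacement produces a vector whose entries are symmetric under permutation of the $c$ indices. Hence $\A_{c,i} W_c\, y_c(x) \in \mathrm{Range}(W_c)$ for every $x$, so there is no inconsistency in applying $W_c^+$ to invert the $W_c$ factor.

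With these in hand, differentiating the identity $\otimes^c x(t) = W_c\, y_c(x(t))$ along $\dot{x} = A_i x$ and using $\frac{d}{dt}(\otimes^c x) = \A_{c,i}(\otimes^c x)$ from \eqref{eqn13} gives $W_c\, \dot{y}_c = \A_{c,i} W_c\, y_c$. Applying $W_c^+$ on the left and using $W_c^+ W_c = I$ yields $\dot{y}_c = B_{c,i}\, y_c$, establishing (ii). The desired conclusion $\dot{V}_c(x) < 0$ on $x \neq 0$ then follows immediately from step (iii), completing the proof.
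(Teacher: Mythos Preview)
Your proof is correct and follows essentially the same approach as the paper's (brief) argument: both rely on the fact that $\mathrm{Range}(W_c)$ is an $\mathcal{A}_c$-invariant subspace, which yields the reduced dynamics $\dot{y}_c = B_{c,i}\,y_c$ and hence the Lyapunov decrease via \eqref{eqn17}. Your version supplies details the paper leaves implicit---the full column rank of $W_c$ (so that $W_c^+ W_c = I$) and the explicit verification of positivity and homogeneity of $V_c$---but the underlying route is the same.
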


The proof of this result comes from the fact that $A_{c}(t)$ has an inherent invariant subspace, resulting from its construction.  As trajectories of 
\begin{equation*}
    \frac{d}{dt}(\otimes^c x) = \mathcal{A}_c(t) (\otimes^c x)
\end{equation*}
are known to begin in this subspace, we can encode the search for meta-Lyapunov functions for the system \eqref{eqn1} as a search for quadratic Lyapunov functions for the reduced order system 
\begin{equation}\label{eqn:ydyn}
    \dot{y}_c = B_c(t)y_c
\end{equation}
where $B_c(t) \in \{B_{c, 1},\, \cdots,\,B_{c, N}\}$ and $y(x)$ is given by \eqref{eqn16}.

Importantly, Theorem \ref{thrm3} allows the system designer to select $y_c(x)$ with whatever ordering properties they like; that is, we do not assume an order to the monomials that are stored in $y_c(x)$.  However, each ordering will induce a unique $W_c$, and thus the resulting Lyapunov conditions will always be the same, regardless of the chosen ordering. Moreover, the constraints given by \eqref{eqn17} are equivalent to the constraints given by \eqref{eqn11}, now with reduced dimensionality.

In the specific case where $n=2$, there is an intuitive ordering to the monomials of $x$; under this assumed ordering, the matrix $W_c$, given by \eqref{eqn16}, can be captured in closed form.

\begin{proposition}\label{prop2}
Consider the system \eqref{eqn1} and let $n=2$.  In this case we have
\begin{equation*}
    M(n, c) = c+1.
\end{equation*}
Additionally, for a positive integer $k \in \N_{\geq 0}$, let $0_{k} \in \R^{k}$ denote a vector populated with zeros.

If $y_c(x) \in \R^{c+1}$ conforms to the ordering
\begin{equation*}
    y_c(x) = \begin{bmatrix}x_1^{c} & x_1^{c-1}x_2 & \cdots & x_2^{c} \end{bmatrix}^T,
\end{equation*}
then we have $\otimes^c x = W_c\, y_c(x)$, where for an integer $k \in \N_{\geq 1}$ we define $W_k$ recursively by
\begin{equation} \label{eqn19}
\begin{array}{l}
     W_1 = I_2 \\
     W_{k} =
     \left[
\begin{array}{cc}
W_{k-1} & 0_{2^{k-1}} \\
\hline
0_{2^{k-1}} & W_{k-1} 
\end{array}
\right] \quad k\geq 2.
\end{array}
\end{equation}
\end{proposition}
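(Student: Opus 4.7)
The plan is to prove this by induction on $c$, exploiting the natural recursive structure of both $\otimes^c x$ and the monomial vector $y_c(x)$ when $n=2$. The count $M(2,c) = \binom{c+1}{1} = c+1$ follows directly from the formula stated earlier, so the content of the proposition is really the recursion \eqref{eqn19}.

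For the base case $c=1$, we have $\otimes^1 x = x = I_2 \,[x_1,\,x_2]^T = W_1\, y_1(x)$, which is immediate. For the inductive step, I would assume $\otimes^{c-1} x = W_{c-1}\, y_{c-1}(x)$ and then use the defining recursion $\otimes^c x = x \otimes (\otimes^{c-1} x)$. Since $x = [x_1;\, x_2]^T$, the Kronecker product with any vector $v \in \R^{2^{c-1}}$ satisfies $x \otimes v = [x_1 v;\, x_2 v]^T$, so
\begin{equation*}
\otimes^c x \;=\; \begin{bmatrix} x_1\, W_{c-1}\, y_{c-1}(x) \\ x_2\, W_{c-1}\, y_{c-1}(x) \end{bmatrix} \;=\; \begin{bmatrix} W_{c-1}\,(x_1 y_{c-1}(x)) \\ W_{c-1}\,(x_2 y_{c-1}(x)) \end{bmatrix}.
\end{equation*}

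The key step is then to read off $x_1 y_{c-1}(x)$ and $x_2 y_{c-1}(x)$ as subvectors of $y_c(x)$. Under the stated ordering $y_c(x) = [x_1^c,\, x_1^{c-1}x_2,\, \ldots,\, x_2^c]^T$, multiplying $y_{c-1}(x)$ by $x_1$ shifts each entry's $x_1$-degree up by one and produces exactly the first $c$ entries of $y_c(x)$; multiplying by $x_2$ produces the last $c$ entries. Writing these as $x_1 y_{c-1}(x) = [I_c \;\; 0_c]\, y_c(x)$ and $x_2 y_{c-1}(x) = [0_c \;\; I_c]\, y_c(x)$ and substituting above yields
\begin{equation*}
\otimes^c x \;=\; \begin{bmatrix} W_{c-1} & 0_{2^{c-1}} \\ 0_{2^{c-1}} & W_{c-1} \end{bmatrix} y_c(x),
\end{equation*}
which is precisely the claimed $W_c\, y_c(x)$.

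There is no real obstacle here beyond careful bookkeeping; the proof is essentially a verification that the block-diagonal recursion for $W_c$ mirrors the overlap pattern between the two shifted copies of $y_{c-1}(x)$ inside $y_c(x)$. The main thing to be precise about is that the overlap — the middle entries of $y_c(x)$ that appear both in $x_1 y_{c-1}$ (as its last entries) and in $x_2 y_{c-1}$ (as its first entries) — is correctly handled by the two zero columns in \eqref{eqn19} being placed on opposite sides, so that no ambiguity arises when reconstructing $\otimes^c x$ from $y_c(x)$.
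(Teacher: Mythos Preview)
Your induction argument is correct and is the natural way to establish the recursion. Note, however, that the paper states Proposition~\ref{prop2} without proof---it is presented as a closed-form observation for the $n=2$ case and then immediately followed by a remark on why no analogous formula is offered for $n>2$. So there is nothing to compare against; your proof simply supplies the verification the paper omits. The bookkeeping you flag (that $x_1 y_{c-1}(x)$ and $x_2 y_{c-1}(x)$ pick out the first and last $c$ entries of $y_c(x)$, respectively, with the shared middle entries handled by the staggered zero columns) is exactly the point, and you have it right.
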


In the case when $n > 2$, it is generally difficult to order the $c^{\text{th}}$ order monomials of $x$ in an intuitive way.  For this reason, we do not expand Proposition \ref{prop2} to account for the case where $n > 2$, nor do we suggest a canonical ordering for the entries of $y_c(x)$.  However, $W_c$ can always be solved for using \eqref{eqn16} once $y_c(x)$ has been chosen.


\section{Relation to Homogeneous Polynomial Lyapunov Functions}
Traditionally, the search for a polynomial Lyapunov functions systems of the form \eqref{eqn1} is encoded as the search for a sum-of-squares polynomial $V(x)$, satisfying \eqref{LyapCond}. 
\begin{definition}
A polynomial $p(x)$ is a \textit{sum-of-squares} in $x$ if there exist polynomials $g_1, \cdots, g_r$ such that 
\begin{equation*}
    p(x) = \sum_{i=1}^r g_i(x)^2.
\end{equation*}
\end{definition}
The search for a sum-of-squares polynomial $V(x)$, satisfying \eqref{LyapCond}, is known to be a convex optimization problem, computable by solving a semidefinite program \cite{parrilo2003semidefinite}.
Many efficient solvers exist to handle such problems \cite{BEFB:94, SeDuMi}.

We next show that the existence of quadratic Lyapunov functions for the hierarchy of dynamical systems \eqref{eqn13} guarantees the existence of a homogeneous sum-of-squares polynomial Lyapunov functions for \eqref{eqn1}, and vice versa. In this sense, all homogeneous sum-of-squares polynomial Lyapunov functions can be thought of as quadratic Lyapunov functions for a related hierarchy of differential equations.  Moreover, one can encode the search for high-order sum-of-squares polynomial Lyapunov functions, which certify the stability of \eqref{eqn1}, as a search for quadratic Lyapunov functions for the related system \eqref{eqn13}.  Calculating sum-of-squares polynomial Lyapunov functions in this way can be used to reduce the amount of machinery required to certify the stability of general switched linear systems of the form \eqref{eqn1}.

\begin{theorem}
There exists a $\overline{P}_c \in \mathbb{R}^{M(n,\,c) \times M(n,\,c)}$ satisfying \eqref{eqn17} for some positive integer $c \in \mathbb{N}_{\geq 1}$, if and only if there exists a homogeneous sum-of-squares polynomial Lyapunov function $V_{c}(x)$ of degree $2c$ for the system \eqref{eqn1}. 
\end{theorem}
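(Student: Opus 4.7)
The biconditional is established via the identification $V_c(x) = y_c(x)^T \overline{P}_c y_c(x)$, which couples symmetric matrices $\overline{P}_c \in \R^{M(n,c)\times M(n,c)}$ with homogeneous polynomials of degree $2c$. My plan is to prove the two directions separately, using Theorem \ref{thrm3} and a Cholesky factorization for ``$\Rightarrow$'', and the Gram-matrix representation of SOS polynomials for ``$\Leftarrow$''.

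For the ``$\Rightarrow$'' direction, I assume a positive-definite $\overline{P}_c$ satisfies \eqref{eqn17} and set $V_c(x) := y_c(x)^T \overline{P}_c y_c(x)$. Theorem \ref{thrm3} already delivers that $V_c$ is a homogeneous polynomial Lyapunov function for \eqref{eqn1} of degree $2c$. To see that $V_c$ is SOS, I factor $\overline{P}_c = L^T L$ via Cholesky with $L$ invertible, giving
\begin{equation*}
  V_c(x) = \|L\,y_c(x)\|_2^2 = \sum_{k=1}^{M(n,c)} \bigl(\ell_k^T y_c(x)\bigr)^2,
\end{equation*}
where $\ell_k^T$ is the $k$-th row of $L$; every summand is the square of a degree-$c$ polynomial in $x$.

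For the ``$\Leftarrow$'' direction, I assume $V_c$ is a homogeneous SOS polynomial Lyapunov function of degree $2c$. A standard homogenization argument (collecting homogeneous components of the constituent squares) yields an SOS decomposition $V_c(x) = \sum_j g_j(x)^2$ in which each $g_j$ is homogeneous of degree $c$. Since the entries of $y_c(x)$ form a basis for homogeneous polynomials of degree $c$, I write $g_j(x) = \ell_j^T y_c(x)$ for some $\ell_j \in \R^{M(n,c)}$, giving $V_c(x) = y_c(x)^T \overline{P}_c y_c(x)$ with $\overline{P}_c := \sum_j \ell_j \ell_j^T$ positive semidefinite. Differentiating along trajectories of \eqref{eqn1} and using $\dot{y}_c = B_c(t) y_c$ from \eqref{eqn:ydyn} gives $\dot{V}_c(x) = y_c(x)^T(B_{c,i}^T \overline{P}_c + \overline{P}_c B_{c,i}) y_c(x)$, which is strictly negative for all $x \neq 0$ by the Lyapunov hypothesis on $V_c$.

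The hard part is strengthening these conclusions---valid only on the image of $y_c$, i.e.\ the Veronese variety---to the genuine matrix inequalities $\overline{P}_c > 0$ and $B_{c,i}^T \overline{P}_c + \overline{P}_c B_{c,i} < 0$ on all of $\R^{M(n,c)}$. The Gram matrix of a homogeneous polynomial is non-unique: it is determined only modulo the ``syzygy'' subspace $\{M = M^T : y_c(x)^T M y_c(x) \equiv 0\}$, and I intend to exploit this freedom. My approach is to first pass to the enriched SOS Lyapunov function $V_c(x) + \varepsilon\,\|y_c(x)\|_2^2$, which remains SOS and which remains a Lyapunov function for $\varepsilon$ sufficiently small by compactness of the unit sphere together with homogeneity of $V_c$ and $\dot{V}_c$, so that its natural Gram matrix $\overline{P}_c + \varepsilon I$ becomes strictly positive definite; I would then search within the resulting syzygy-affine family of admissible Gram matrices for one certifying the full matrix LMI, appealing to strict feasibility of the associated semidefinite program that encodes the strict Lyapunov decrease. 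Reconciling the Gram-matrix freedom with the matrix inequality on the whole of $\R^{M(n,c)}$ is the delicate step where the bulk of the technical work lies.
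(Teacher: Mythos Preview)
Your forward direction is correct and matches the paper's argument: Theorem~\ref{thrm3} gives the Lyapunov property, and positive definiteness of $\overline{P}_c$ gives the SOS structure via Cholesky.

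For the converse, you have in fact been \emph{more} careful than the paper. The paper's own proof simply asserts that if $p(x)=y_c(x)^T Z\,y_c(x)$ is a homogeneous SOS Lyapunov function then ``$Z>0$ and $\dot p(x)<0$ for all $x$'' and from this directly concludes $B_{c,i}^T Z + Z B_{c,i}<0$. It does not address the two issues you flag: (i) an SOS Gram matrix is only guaranteed positive \emph{semidefinite}, and is non-unique modulo the syzygy subspace; (ii) negativity of the quadratic form $y_c(x)^T(B_{c,i}^T Z + Z B_{c,i})y_c(x)$ on the Veronese image does not by itself force the matrix to be negative definite on all of $\R^{M(n,c)}$. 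So the ``delicate step'' you isolate is simply absent from the paper's argument.

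Your proposed repair---pass to $V_c+\varepsilon\|y_c\|^2$ to force $\overline{P}_c\succ 0$, then search the syzygy-affine family for a Gram matrix certifying the full LMI---is the right instinct, but note a circularity risk: appealing to ``strict feasibility of the associated semidefinite program'' is essentially assuming what you want to prove. A cleaner route is to observe that the hypothesis ``$V_c$ is a Lyapunov function'' only gives $-\dot V_c>0$ pointwise, not that $-\dot V_c$ is itself SOS; absent that, there is no reason any Gram matrix of $V_c$ should satisfy the matrix inequality \eqref{eqn17} on the whole space. In other words, the converse as stated may require the additional (standard, and often implicit in SOS-based searches) assumption that $-\langle\nabla V_c,A_i x\rangle$ admits an SOS certificate as well. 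The paper does not supply this, and neither does your sketch; you are right that this is where the real work lies.
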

\begin{proof}
A sum-of-squares polynomial that is homogeneous in the entries $x$ and of order $2c$ will take the form $p(x) = y_c(x)^T Z y_c(x)$, where $Z \in \R^{M(n, c) \times M(n, c)}$ is symmetric, and $y_c(x)$ and $M(n, c)$ retain their definitions from Theorem \ref{thrm3}.  From Theorem \ref{thrm3}, we have that if $\overline{P}_c \in \mathbb{R}^{M(n,\,c) \times M(n,\,c)}$ satisfies \eqref{eqn11} for some positive integer $c \in \mathbb{N}_{\geq 1}$, then we have that $V_c(x) = y_c(x)^T \overline{P} y_c(x)$ is a homogeneous polynomial Lyapunov function for \eqref{eqn20} and, moreover, $V_c(x)$ is a sum-of-squares.
To prove the converse, we note that if $p(x) = y_c(x)^T Z y_c(x)$ is a homogeneous sum-of-squares polynomial Lyapunov function for \eqref{eqn20} then $Z > 0$ and $\dot{p}(x) < 0$ for all $x \in \R^n$.  From the dynamics of $y_c(x)$, given as \eqref{eqn:ydyn}, we have $B^T_{c,\, i} Z + Z B_{c, i} < 0$ for all $i \in \{1, \cdots, N\}$.  Therefore $\overline{P}_c = Z$ solves \eqref{eqn17}.
\end{proof}


\section{Numerical Example}
In this section, we provide an example case and prove the stability of a switched linear system using a meta-Lyapunov function based approach.  An algorithm is provided for generating homogeneous polynomial Lyapunov functions for switched systems, which follows the procedure detailed in Theorem \ref{thrm3}; this algorithm is specifically written for implementation with CVX, a convex optimization toolbox made for use with MATLAB \cite{cvx}.  We also provide a comparison to a similar search for homogeneous polynomial Lyapunov functions that was implemented using SOSTOOLS, a sum-of-squares optimization toolbox made for use with MATLAB \cite{sostools}.  Experimental results are provided from MATLAB 2019b, which was run on a 2017 Macbook Pro laptop.

\subsection{Problem Formulation}
We consider the linear time-variant system
\begin{equation}\label{eqn20}
    \dot{x} = A(t) x \qquad A(t) \in \{A_1,\, A_2\} 
\end{equation}
\begin{equation*}
    A_1 = \begin{bmatrix}-.5 &.5 \\ -.5 & -.5\end{bmatrix} \qquad
    A_2 = \begin{bmatrix}-2.5 &2.5 \\ -2.5 & 1.5\end{bmatrix}.
\end{equation*}
In the following, we go about showing that \eqref{eqn20} is stable.  This is done, at first, through the computation of a quadratic Lyapunov function $V_1(x) = x^TPx$, which satisfies \eqref{LyapCond}, and then through the computation of higher-order homogeneous polynomial Lyapunov functions using the procedure detailed in Theorem \ref{thrm3}.

Importantly, if the system \eqref{eqn20} begins at an initial position $x_0 = x(0)$, and there exists a Lyapunov function $V_c(x)$ that certifies the stability of \eqref{eqn20}, then the infinite-time system trajectory is constrained to stay inside
\begin{equation}\label{idk}
    x(t) \in \{x \in \R^n \,\vert\, V_c(x) \leq V_c(x_0)\}
\end{equation}
for all $t \geq 0$.  For this reason, we select $V_c(x)$ as the minimizers of a suitable objective function, as to shrink the resulting invariant region derived through \eqref{idk}.
In what follows, we additionally show that computing higher-order meta-Lyapunov functions allows one to characterise tighter invariant sets by \eqref{idk}, even when the same objective function is used in each computation.

\subsection{Identifying Meta-Lyapunov Functions}
We search for meta-Lyapunov functions for \eqref{eqn20} using a semidefinite program.  
Specifically, when searching for a homogeneous Lyapunov function of order $2c$, we first calculate $B_{c, 1}$ and $B_{c, 2}$ using equations \eqref{eqn12}, \eqref{eqn18} and \eqref{eqn19}, and then we search for a symmetric positive-definite matrix $\overline{P}_c \in \R^{(c+1) \times (c+1)}$ that satisfies \eqref{eqn17}.  Such a matrix identifies $V_{c}(x) = y_c(x)^T\, \overline{P}_c\, y_c(x)$ as a polynomial Lyapunov function for \eqref{eqn20}, which is homogeneous in the entries of $x$ and of order $2c$.  
We implement the aforementioned procedure with Algorithm \ref{alg:one}, which specifically relies on CVX, a convex optimization toolbox built for use with MATLAB \cite{cvx, gb08}. Algorithm \ref{alg:one} takes as inputs the system parameters $A_1$ and $A_2$, and a positive integer $c$, and returns a matrix $\overline{P}_c$, in the case that one exists, which satisfies \eqref{eqn11} at the $c^{\text{th}}$ level.

\begin{algorithm}[t!]
\caption{Computing Meta-Lyapunov Functions}
\begin{algorithmic}[1]
\setlength\tabcolsep{0pt}
\Statex\begin{tabulary}{\linewidth}{@{}LLp{6cm}@{}}
\textbf{input}&:\:\:& $A_1,\, A_2 \in \mathbb{R}^{2\times 2}$ from \eqref{eqn1}. $c \in \N_{\geq 1}$.\\
\textbf{output}&:\:\:& $\overline{P}_{c} \in \mathbb{R}^{(c+1) \times (c+1)}$ satisfying \eqref{eqn17}.\\
&&
\end{tabulary}
\Function{MetaLyapunov}{$A_1,\ A_2,\, c$}
\State \textbf{Initialize: } Compute $\mathcal{A}_{c, 1}$ and $\mathcal{A}_{c, 2}$ by \eqref{eqn12}
\State \qquad\qquad\:\; Compute $W_c$ by \eqref{eqn19}
\State $B_{c, 1} \gets W_c^+ \mathcal{A}_{c, 1} W_c$
\State $B_{c, 2} \gets W_c^+ \mathcal{A}_{c, 2} W_c$
\State \textbf{cvx\_begin sdp}
\State \textbf{variable} $\overline{P}_{c}(c + 1, c+1)$ \textbf{semidefinite}
\State $\,0 \:\:> B_{c, 1}^T\, \overline{P}_{c} + \overline{P}_{c}\,B_{c, 1}$
\State $\,0 \:\:> B_{c, 2}^T\, \overline{P}_{c} + \overline{P}_{c}\,B_{c, 2}$
\State $\overline{P}_{c} > I_n$
\State \%\% Possibly Insert Objective Function 
\State \textbf{cvx\_end}
\If{Program feasible}
\State \textbf{return} $\overline{P}_{c}$
\Else
\State \textbf{return} `infeasible'
\EndIf
\EndFunction
\State\textbf{end function}
\end{algorithmic}
\label{alg:one}
\end{algorithm}

Note that Algorithm \ref{alg:one} computes the solution to a feasibility problem, rather than an optimization problem; that is, while Algorithm \ref{alg:one} searches for a $\overline{P}_c$ that satisfies the meta-Lyapunov constraint \eqref{eqn17}, this solution is computed without referencing any objective function.  Note however, that in the instance that multiple feasible solutions exist, it is preferable to choose $\overline{P}_c$ such that the sublevel sets of the resulting homogeneous Lyapunov function $V_{c}(x) = y_c(x)^T\, \overline{P}_c\, y_c(x)$ are small; this is due to the fact that $V_c(x)$ can be used to find infinite time reachable sets of \eqref{eqn20} under arbitrary switching.  For this reason, it is desirable to compute $\overline{P}_c$ as the solution to an optimisation problem, rather than a feasibility problem.

Little is known, in general about how one can relate the parameters of a polynomial to the volume of its sublevel sets. In our case as well, it is difficult to associate a metric of optimality with the a feasible solution to the meta-Lyapunov constraints \eqref{eqn17}.  Through experimentation, we have generally found that it is preferable to compute numerous solutions using different objective functions, and then compute an invariant region as the intersection of their respective sublevel sets.  Specifically we recommend using either using the objective function

\begin{algorithm}[h!]
\begin{algorithmic}[1]
\setlength\tabcolsep{0pt}
\setcounter{ALG@line}{10}
\State \textbf{minimize} $\overline{P}_c(1,\, 1)$
\end{algorithmic}
\end{algorithm}

\noindent{}which minimises the coefficient on $x_1^{2c}$ in the resultant Lyapunov function $V_{c}(x)$, or 

\begin{algorithm}[h!]
\begin{algorithmic}[1]
\setlength\tabcolsep{0pt}
\setcounter{ALG@line}{10}
\State \textbf{minimize} $\overline{P}_c(c+1,\,c+1)$
\end{algorithmic}
\end{algorithm}
\noindent{}which minimises the coefficient on $x_2^{2c}$.  These objective functions are provided in psuedocode, such that they can easily be inserted in Algorithm \ref{alg:one} at line 11.

\subsection{Numerical Results and Comparison with SOSTOOLS}
We now return to the example system \eqref{eqn20}, and compute feasible meta-Lyapunov functions with Algorithm \ref{alg:one}.
Additionally, we compute an over approximation of the infinite time reachable set of \eqref{eqn20} when beginning from the initial conditions $x(0) = [1,\,0]^T$.

As discussed in the preceding, we compute these invariant sets by implementing Algorithm \ref{alg:one}, while attempting to minimize $\overline{P}(1, 1)$, i.e. the coefficient on $x_1^{2c}$; see Algorithm \ref{alg:one}, Line 11.  This procedure was computed in MATLAB 2019b using CVX.

In the case of this example, Algorithm \ref{alg:one} was computed for $c \in \{1, 2, \cdots, 13\}$, thus generating homogeneous polynomial Lyapunov functions for all even orders between 2 and 26.  These Lyapunov functions were then used to calculate invariant regions of the state space using \eqref{idk}; see Figure \ref{fig:cs1}. 
Note that as the order of the meta-Lyapunov function increases, the derived invariant sets shrink in volume.   Further, certain higher-order the meta-Lyapunov functions were shown to have non-convex sublevel sets. We provide the number of solver iterations for each experiment, as well as the computations times, in Figure \ref{table:1}.

\begin{figure}[t]
    \centering
    \input{Figure1.tikz}
    \caption{Simulated system response of \eqref{eqn20}.  When starting from $x_0 = [1, \: 0] ^T$, the system can only reach the region shown in light yellow, which was computed via simulation.  The equipotentials of high-order homogeneous Lyapunov functions are also shown.  Specifically, the dark blue, light blue, orange and red regions represent invariant sets calculated using quadratic, 10$^{\text{th}}$-order, 16$^{\text{th}}$-order and 26$^{\text{th}}$-order meta-Lyapunov functions, respectively.  The invariance of these regions is shown by \eqref{idk}.}
    \label{fig:cs1}
\end{figure}

We next compare the meta-Lyapunov function based method to the more traditional sum-of-squares based approach for calculating invariant regions.  Specifically, we search for high-order homogeneous polynomial Lyapunov functions for the system \eqref{eqn20} using SOSTOOLS, MATLAB's sum-of-squares toolbox \cite{sostools}.  We similarly implement SOSTOOLS with the solver SDPT3 and attempt to generate homogeneous polynomial Lyapunov functions for the system \eqref{eqn20} while minimizing the coefficient on $x_1^{2c}$.
As was the case previously, we provide the number of solver iterations for each experiment, as well as the computations times (See Figure \ref{table:1}).

In the experiment, SOSTOOLS was only able to generate homogeneous polynomial Lyapunov functions of order 10 or below; a solver error was returned during each search for more complex Lyapunov function. 
In contrast, Algorithm \ref{alg:one}, when implemented through CVX, was able to generate up to $26^{\text{th}}$-order polynomial Lyapunov functions while minimizing the same objective function.  We attribute this discrepancy to the fact that many of the steps required in a traditional sum-of-squares based search optimization are not required by Algorithm \ref{alg:one}.  For example, Algorithm \ref{alg:one} does not compute the time rate of change of the monomials in $x$ of order $2c$; that is, Algorithm \ref{alg:one} begins with a closed from representation of $\dot{y}_c(x)$, which is encoded in the matrices $B_{c, 1}$ and $B_{c, 2}$.  SOSTOOLS must compute $\dot{y}_c(x)$ online as a sum of squares of lower-order monomials and,  for this reason, one can expect SOSTOOLS to preform less efficiently.  

Despite this, in the experiments where SOSTOOLS was able to correctly generate homogeneous Lyapunov functions, we found that SOSTOOLS preformed faster in computation that the meta-Lyapunov search implemented with CVX; however, it did take SDPT3 more solver iterations to generate the solution which minimized the objective function when implemented through SOSTOOLS (Figure \ref{table:1}).

\begin{figure}
	\centering
	\includegraphics[width = 0.42\textwidth]{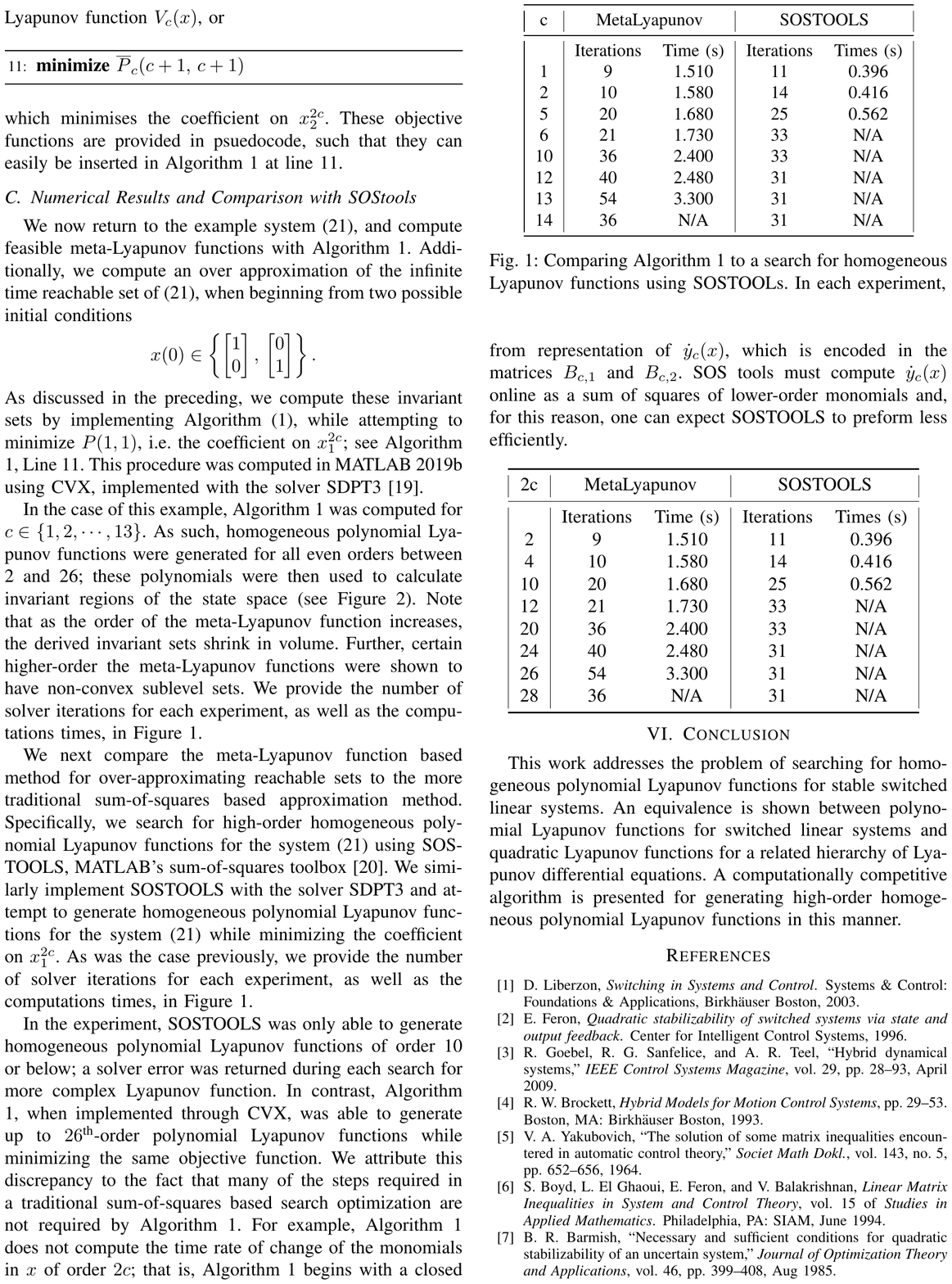}
	\caption{Comparing Algorithm \ref{alg:one} to a search for homogeneous Lyapunov functions using SOSTOOLS. Algorithm \ref{alg:one} is used to compute homogeneous Lyapunov functions of orders 2, 4, 10, 12, 20, 24, and 26 for the system \eqref{eqn20}.  SOSTOOLS, however, is only able to correctly generate Lyapunov functions of order 10 and below. The computation time and number of solver iterations are provided for each experiment. The symbol N/A is used when the solver is unable to find a homogeneous Lyapunov function of a certain order; in this case, the number of solver iterations which were preformed before failure is also provided.}
	\label{table:1}
\end{figure}

\section{Conclusion}
This work addresses the problem of searching for homogeneous polynomial Lyapunov functions for stable switched linear systems.
An equivalence is shown between polynomial Lyapunov functions for switched linear systems and quadratic Lyapunov functions for a related hierarchy of Lyapunov differential equations.  A computationally competitive algorithm is presented for generating high-order homogeneous polynomial Lyapunov functions in this manner.

\bibliography{Bibliography}

\begin{thebibliography}{10}

\bibitem{liberzon2003switching}
D.~Liberzon, {\em Switching in Systems and Control}.
\newblock Systems \& Control: Foundations \& Applications, Birkh{\"a}user
  Boston, 2003.

\bibitem{feron1996quadratic}
E.~Feron, {\em Quadratic stabilizability of switched systems via state and
  output feedback}.
\newblock Center for Intelligent Control Systems, 1996.

\bibitem{4806347}
R.~{Goebel}, R.~G. {Sanfelice}, and A.~R. {Teel}, ``Hybrid dynamical systems,''
  {\em IEEE Control Systems Magazine}, vol.~29, pp.~28--93, April 2009.

\bibitem{Brockett1993}
R.~W. Brockett, {\em Hybrid Models for Motion Control Systems}, pp.~29--53.
\newblock Boston, MA: Birkh{\"a}user Boston, 1993.

\bibitem{yak1}
V.~A. Yakubovich, ``The solution of some matrix inequalities encountered in
  automatic control theory,'' {\em Societ Math Dokl.}, vol.~143, no.~5,
  pp.~652--656, 1964.

\bibitem{BEFB:94}
S.~Boyd, L.~{El~{G}haoui}, E.~Feron, and V.~Balakrishnan, {\em Linear Matrix
  Inequalities in System and Control Theory}, vol.~15 of {\em Studies in
  Applied Mathematics}.
\newblock Philadelphia, PA: {SIAM}, June 1994.

\bibitem{Barmish1985}
B.~R. Barmish, ``Necessary and sufficient conditions for quadratic
  stabilizability of an uncertain system,'' {\em Journal of Optimization Theory
  and Applications}, vol.~46, pp.~399--408, Aug 1985.

\bibitem{yoon2019}
Y.~Yoon, C.~Klett, and E.~Feron, ``Bounding the state covariance matrix for a
  randomly switching linear system with noise,'' {\em arXiv preprint
  arXiv:1905.09427}, 2019.

\bibitem{yak2}
V.~Yakubovich, ``The method of matrix inequalities in the stability theory of
  nonlinear control systems,'' {\em Automation and Remote Control}, vol.~26,
  pp.~577--592, 1965.

\bibitem{yak3}
V.~A. Yakubovich, ``Frequency conditions for the existence of absolutely stable
  periodic and almost periodic limiting regimes of control systems with many
  nonstationary elements,'' {\em IFAC World Congress}, 1966.

\bibitem{rantzer1997}
M.~Johansson and A.~Rantzer, ``Computation of piecewise quadratic lyapunov
  functions for hybrid systems,'' in {\em 1997 European Control Conference
  (ECC)}, pp.~2005--2010, IEEE, 1997.

\bibitem{parrilo2000}
P.~A. Parrilo, {\em Structured semidefinite programs and semialgebraic geometry
  methods in robustness and optimization}.
\newblock PhD thesis, California Institute of Technology, 2000.

\bibitem{parrilo2003semidefinite}
P.~A. Parrilo, ``Semidefinite programming relaxations for semialgebraic
  problems,'' {\em Mathematical programming}, vol.~96, no.~2, pp.~293--320,
  2003.

\bibitem{CPLF}
P.~Mason, U.~Boscain, and Y.~Chitour, ``Common polynomial lyapunov functions
  for linear switched systems,'' {\em SIAM journal on control and
  optimization}, vol.~45, no.~1, pp.~226--245, 2006.

\bibitem{6161493}
A.~A. {Ahmadi} and P.~A. {Parrilo}, ``Converse results on existence of sum of
  squares {Lyapunov} functions,'' in {\em 2011 50th IEEE Conference on Decision
  and Control and European Control Conference}, pp.~6516--6521, Dec 2011.

\bibitem{SeDuMi}
J.~F. Sturm, ``Using sedumi 1.02, a matlab toolbox for optimization over
  symmetric cones,'' {\em Optimization Methods and Software}, vol.~11, no.~1-4,
  pp.~625--653, 1999.

\bibitem{cvx}
M.~Grant and S.~Boyd, ``{CVX}: Matlab software for disciplined convex
  programming, version 2.1,'' Mar. 2014.

\bibitem{sostools}
G.~V. S. P. P.~S. A.~Papachristodoulou, J.~Anderson and P.~A. Parrilo, {\em
  {SOSTOOLS}: Sum of squares optimization toolbox for {MATLAB}}.

\bibitem{gb08}
M.~Grant and S.~Boyd, ``Graph implementations for nonsmooth convex programs,''
  in {\em Recent Advances in Learning and Control}, Lecture Notes in Control
  and Information Sciences, pp.~95--110, Springer-Verlag Limited, 2008.

\end{thebibliography}
\bibliographystyle{ieeetr}
\end{document}